\DeclareRobustCommand{\cev}[1]{%
  \mathpalette\do@cev{#1}%
}
\newcounter{theoremcounter}
\newtheorem{theorem}{Theorem}
\newtheorem{conjecture}{Conjecture}
\newtheorem{lemma}{Lemma}
\newenvironment{proof}[1][Proof]{\begin{trivlist}
\item[\hskip \labelsep {\bfseries #1}]}{\end{trivlist}}
\newcommand{\qed}{\hfill $\blacksquare$}
\newcommand{\bra}[1]{\langle #1|}
\newcommand{\ket}[1]{|#1\rangle}
\newcommand{\braket}[2]{\langle #1|#2\rangle}
\newcommand{\expval}[3]{\langle #1|#2|#3\rangle}
\newcommand{\vast}{\bBigg@{4}}
\newcommand{\Vast}{\bBigg@{5}}
\begin{document}

\title{Improved Weak Simulation of Universal Quantum Circuits by Correlated \(L_1\) Sampling}
\author{Lucas Kocia}
\affiliation{Sandia National Laboratories, Livermore, California 94550, U.S.A.}
\begin{abstract}
Bounding the cost of classically simulating the outcomes of universal quantum circuits to additive error \(\delta\) is often called weak simulation and is a direct way to determine when they confer a quantum advantage. Weak simulation of the \(T\)+Clifford gateset is \(BQP\)-complete and is expected to scale exponentially with the number \(t\) of \(T\) gates. We constructively tighten the upper bound on the worst-case \(L_1\) norm sampling cost to next order in \(t\) from \(\mathcal O(\xi^t \delta^{-2})\) if \(\delta^2 \gg \xi^{-t}\) to \(\mathcal O((\xi^t{-}t) \delta^{-2} )\) if \(\delta^2 \gg (\xi^t -t)^{-1}\), where \(\xi^t = 2^{\sim 0.228 t}\) is the stabilizer extent of the \(t\)-tensored \(T\) gate magic state. We accomplish this by replacing independent \(L_1\) sampling in the popular SPARSIFY algorithm used in many weak simulators with correlated \(L_1\) sampling.
As an aside, this result demonstrates that the \(T\) gate magic state's approximate stabilizer state decomposition is not multiplicative with respect to \(t\), for finite values, despite the multiplicativity of its stabilizer extent.
This is the first weak simulation algorithm that has lowered this bound's dependence on finite \(t\) in the worst-case to our knowledge and establishes how to obtain further such reductions in \(t\).
\end{abstract}
\maketitle

Weak simulation is defined as the task of sampling the probabilities of universal quantum circuits to additive error. It is expected to require exponential resources on a classical computer since it is \(BQP\)-complete. Reducing the cost of classically simulating quantum computers~\cite{Harrow17} is necessary to characterize near-term noisy intermediate-scale quantum (NISQ) computers~\cite{Preskill18} that are rapidly growing in size and performance.

Universal quantum computation can be achieved using stabilizer states, the Clifford+\(T\) gateset, and Pauli measurement. An equivalent measurement-based formalism can be written in terms of stabilizer states, \(T\) gate magic states and Pauli measurements~\cite{Bravyi16_2}. Approximating outcomes samples to additive error \(\delta\) is equivalent to replacing the underlying probability distribution with one that is \(\delta\)-close to it and then sampling from this approximate distribution. This naturally splits up many weak simulation implementations into a ``sparsification'' step and a measurement step. The measurement step consists of taking idempotent projections, \(\expval{\psi}{\Pi}{\psi} = |\Pi\ket{\psi}|^2 \equiv |\psi'|^2\), and so is frequently called a ``normalization'' step instead.

The SPARSIFY algorithm introduced~\cite{Bravyi16_1} a method of generating an \(L_1\) sparsification of a given state \(\psi\) to \(\delta\) additive error with \(\mathcal O(2^{\sim 0.228 t} \delta^{-2})\) stabilizer states, which is asymptotically optimal as \(t \rightarrow \infty\) and \(\delta \rightarrow 0\) (see Lemma \(2\) in~\cite{Bravyi16_1}). As a result, the authors conjectured the following lower bound:
\begin{conjecture}
  \label{conj:lowerbound}
  Any approximate stabilizer decomposition of \(T^{\otimes t}\) that achieves a constant approximation error must use at least \(\Omega(2^{\sim 0.228 t})\) stabilizer states.
\end{conjecture}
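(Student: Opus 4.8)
The plan is to lower-bound the number \(\chi\) of stabilizer states in any decomposition \(\ket{\tilde\psi} = \sum_{i=1}^{\chi} c_i \ket{\phi_i}\) that satisfies \(\| \, \ket{T^{\otimes t}} - \ket{\tilde\psi} \, \| \le \epsilon\) for fixed \(\epsilon\). The first step is to extract a \emph{robust} lower bound on the coefficient mass \(\|c\|_1\) rather than on \(\chi\) directly. By convex duality for the stabilizer extent, there exists a dual witness \(\ket{W}\) with \(|\braket{\phi}{W}| \le 1\) for every stabilizer state \(\phi\) and \(|\braket{W}{T^{\otimes t}}|^2 = \xi^t\). Since \(|\braket{W}{\tilde\psi}| = |\sum_i c_i \braket{W}{\phi_i}| \le \|c\|_1\), the reverse triangle inequality and Cauchy--Schwarz give \(\|c\|_1 \ge |\braket{W}{T^{\otimes t}}| - \|W\|\,\epsilon = \xi^{t/2} - \|W\|\,\epsilon\), so that for \(\epsilon\) small compared to \(\xi^{t/2}/\|W\|\) I obtain \(\|c\|_1 = \Omega(\xi^{t/2})\). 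This step uses only the (known) multiplicativity of the extent under tensor powers, and is where the \(0.228\) exponent enters; verifying that the witness norm \(\|W\|\) is sub-exponential is what would promote this to the constant-\(\epsilon\) regime of the conjecture.

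The difficulty is that a bound on \(\|c\|_1\) does not by itself bound \(\chi\): one could imagine a few stabilizer states carrying enormous, nearly cancelling coefficients. To convert coefficient mass into a count I would pass through the \(2\)-norm via Cauchy--Schwarz, \(\|c\|_1 \le \sqrt{\chi}\,\|c\|_2\), which gives \(\chi \ge \|c\|_1^2/\|c\|_2^2 = \Omega(\xi^t/\|c\|_2^2)\). The entire statement then reduces to the claim that an approximately extent-optimal decomposition can be taken with \(\|c\|_2 = O(1)\), equivalently that the Gram matrix of the participating stabilizer states is well conditioned.

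I expect this conditioning claim to be the main obstacle, and it is exactly where the genuine hardness of stabilizer-rank lower bounds resides: ruling out large-coefficient, near-cancelling decompositions requires controlling the geometry of the stabilizer polytope near \(\ket{T^{\otimes t}}\), which current techniques do not supply, since the best unconditional approximate-rank lower bounds remain polynomial in \(t\). My proposed route is to restrict to decompositions that saturate the extent and to exploit the complementary-slackness structure of its convex program: at optimality the support of \(c\) should be confined to stabilizer states \(\phi_i\) with \(|\braket{\phi_i}{W}| = 1\) that are mutually ``in phase'' with \(\ket{\tilde\psi}\), a condition that would pin the overlaps \(\braket{\phi_i}{\phi_j}\) away from unity and hence bound \(\|c\|_2\).

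Finally, I would make the count explicit by combining this conditioning with the tensor-product structure: because both \(\ket{T^{\otimes t}}\) and the saturating witness factorize, a single-copy analysis of which stabilizer states can enter the support, lifted multiplicatively to \(t\) copies through a fidelity/net argument, should reproduce the \(\Omega(\xi^t)\) growth. Absent a proof of the conditioning step, however, the argument establishes only the unconditional bound \(\|c\|_1 = \Omega(\xi^{t/2})\), which is consistent with---but strictly weaker than---the conjecture; closing the \(\|c\|_1\)-to-\(\chi\) gap is the essential and, to my knowledge, currently open difficulty.
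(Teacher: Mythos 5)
There is no ``paper's own proof'' to compare against here: the statement you were given is Conjecture~\ref{conj:lowerbound}, which the paper (following Bravyi and Gosset) states as an open conjecture and never proves --- indeed the paper's entire contribution runs in the opposite direction, tightening the \emph{upper} bound at finite \(t\). So the question is whether your proposal actually closes an open problem, and it does not; you say as much yourself, but it is worth being precise about why the gap is fatal rather than technical. Your first step is fine and can even be made unconditional: the Bravyi--Gosset dual witness factorizes as \(W = y^{\otimes t}\) with \(\|y\|^2 = 4 - 2\sqrt{2} = \xi\), so \(\|W\| = \xi^{t/2}\) exactly, and the robust bound reads \(\|c\|_1 \ge \xi^{t/2} - \|W\|\epsilon = (1-\epsilon)\,\xi^{t/2}\), which is \(\Omega(\xi^{t/2})\) for any constant \(\epsilon < 1\) --- no sub-exponentiality of \(\|W\|\) is needed. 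But this bounds the coefficient mass, not the number of states, and the conversion \(\chi \ge \|c\|_1^2/\|c\|_2^2\) is vacuous without control of \(\|c\|_2\). That control is precisely the approximate stabilizer rank lower bound problem, for which the best unconditional results remain polynomial in \(t\); your Cauchy--Schwarz reduction repackages the conjecture, it does not attack it.

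The proposed repair --- complementary slackness for extent-saturating decompositions --- fails structurally, not just technically. The conjecture quantifies over \emph{all} \(\epsilon\)-approximate decompositions with few states, whereas complementary slackness constrains only decompositions that are feasible and (near-)optimal for the extent program. An adversarial decomposition with small \(\chi\) would have enormous, nearly cancelling coefficients and therefore enormous \(\|c\|_1\); it is nowhere near the optimum of the convex program, so the KKT/slackness structure of that program says nothing about its support. Restricting attention to extent-optimal decompositions silently changes the quantifier and discards exactly the case that must be ruled out. The honest summary of your argument is the one you give in your last sentence: it establishes \(\|c\|_1 = \Omega(\xi^{t/2})\) (equivalently, that the extent of \(T^{\otimes t}\) is robust to constant perturbation), which is consistent with, and strictly weaker than, the conjecture --- and which is already implicit in the literature the paper builds on.
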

This approximated state's inner product must then be sampled under random Pauli measurements to complete a weak simulation algorithm. The full weak simulation cost is the number of stabilizer states produced by SPARSIFY multiplied by \(\mathcal O(t^3 \delta^{-2})\).

Subsequent works~\cite{Howard18,Seddon20,Pashayan21} almost all use the SPARSIFY algorithm or a similar sparsification method.
Improvements have included an extension of the method to diagonal states (other than the \(T\) magic state)~\cite{Howard18} and mixed states~\cite{Seddon20}, constant factor improvements~\cite{Howard18}, a decrease in the power of \(\delta\) cost for magic states~\cite{Seddon20}, better performance when the values of the sampled probabilities are in certain regimes~\cite{Pashayan21}, and an extension to Born probabilities~\cite{Pashayan21}. 

Nevertheless, these methods have all saturated the asymptotic conjectured lower bound w.r.t. \(t\) even when they are not in the asymptotic limit of \(t \rightarrow \infty\) and \(\delta \rightarrow 0\); they all require \(\mathcal O (2^{\sim 0.228 t})\) stabilizer states in the worst case.

Since the lower bound given by Conjecture~\ref{conj:lowerbound} is an asymptotic bound, there is no reason to consider it limiting for finite \(t\) and \(\delta\). Indeed, the finite regime is the most useful for practical simulations and validations of near-term devices. Non-asymptotic reductions in \(t\) can greatly increase the size of universal quantum circuits that are simulatable by today's classical computers and thereby change when they confer quantum advantage.

Here we introduce the first such reduction in \(t\) and demonstrate its practical usefulness for finite-sized circuits. Since the reduction occurs in the SPARSIFY algorithm used by many contemporary weak simulators, it can be implemented in current applications with minimal change and improve their performance. The key idea is replacing independent \(L_1\) sampling with correlated \(L_1\) sampling.

To begin, we define the general family of diagonal states that we want to approximate. Following~\cite{Bravyi16_1,Bravyi16_2}, we define the \(t\)-tensored state:
\begin{equation}
\label{eq:diagstate}
  \ket{D_\phi^{\otimes t}} \equiv (2 \nu)^{-t} \sum_{x \in \mathbb F_2^t} \ket{\tilde x_1 \otimes \cdots \otimes \tilde x_t},
\end{equation}
where 
\begin{eqnarray}
\ket{\tilde 0} \equiv \frac{i}{\sqrt{2}} (-i + e^{-\pi i/4}) (-i + e^{i \phi}) \ket{0},\\
\ket{\tilde 1} \equiv \frac{i}{\sqrt{2}} (1 + e^{-\pi i/4}) (1 - e^{i \phi}) \frac{1}{\sqrt{2}} (\ket{0} + \ket{1}),
\end{eqnarray}
and \(\nu \equiv \cos \pi/8\). \(\ket{D_\phi^{\otimes t}}\) has \(L_1\) norm squared, when considered over the set of stabilizer states, that is minimized by this stabilizer decomposition into \(\ket{\tilde 0}\) and \(\ket{\tilde 1}\): 
\begin{equation}
\left(\sum_{x \in \mathbb F_2^t} |c_{\tilde x}|\right)^2 = \left(\sqrt{1 - \sin\phi} + \sqrt{1 - \cos\phi}\right)^{2t} \equiv \xi_\phi^t.
\end{equation}
Calling \(\xi_\phi^t\) an \(L_1\) norm is an abuse of the term, since stabilizer states are overcomplete. As a result \(\xi_\phi^t\) is often called the \emph{stabilizer extent}, and is more accurately defined as the minimum value of \(\|c\|_1\) over all stabilizer decompositions.

\(\ket{D_{\pi/4}^{\otimes t}} = \ket{H^{\otimes t}}\) 
for \(\ket H \equiv e^{-i \pi/8} S H \ket T\) and
\begin{equation}
\ket{T} = \frac{1}{\sqrt{2}} (\ket 0 + \sqrt{i} \ket 1),
\end{equation}
 where \(S\) and \(H\) are the Clifford gates phase shift and Hadamard, respectively. \(\xi_{\pi/4}^t = \nu^{-2t} = 2^{\sim 0.228 t}\), which sets the exponential scaling for sampling this magic state. Since the \(H\) magic state is related to the \(T\) magic state by a Clifford unitary under which stabilizer states are closed, its stabilizer state approximation cost is the same.

Before we get into generating an approximation to these states, we first need to establish a useful result:

 \begin{lemma}[\(t\) Stabilizer States with \(t/2\) Bit Flips]
   \label{le:bitflip}
   Given a \(t\)-bitstring, there exist \(t\) additional different \(t\)-bitstrings such that every pair of bitstrings differs by at least \(t/2\) bits.
 \end{lemma}
 \begin{proof}
   We examine powers of \(2\). WLOG we can always assume that the given bitstring consists of all \(1\)s since any additional bitstrings to this can be generalized to any other bitstring by XORing it.

   Let \(t=2^k\). Consider iterative splitting of the \(2^k\) bitstring by binary tree with layers \(0 \le i \le k-1\) (see Table~\(1\) for examples). We define the base layer \(k=0\) to consist of two bitstrings, \(\alpha \cdots \alpha\) and \(\beta \cdots \beta\), where \(\alpha \equiv 01\) and \(\beta \equiv 10\). Given a \(t\)-bitstring of all \(1\)s, it is clear that these two bitstrings differ from it by \(t/2\) bitflips and from each other by \(t>t/2\) bitflips.

In every subsequent layer we consider \(t\)-bitstrings that are evenly split into \(2 \le 2^i \le 2^{k-1}\) contiguous blocks of bits that we treat all together when assigning values. We assign values \(\alpha\) and \(\beta\) to the blocks \emph{such that} the number of \(\alpha\)s and \(\beta\)s are even in every pair of blocks corresponding to a larger block a layer above and at least half the assignments differ from every other bitstring in the same layer.

This means that every layer \(1 \le i \le k-1\) will have \(t/2\) \(\alpha\)s and \(t/2\) \(\beta\)s. Hence, they will differ from the given bitstring of all \(1\)s by \(t/2\) bit flips. This also means that they will differ from bitstrings in the zeroth layer consisting of only \(\alpha\)s or \(\beta\)s by \(t/2\) bit flips.

Moreover, since the nature of this binary tree splitting converts blocks from higher levels into evenly split subblocks with the same number of \(\alpha\)s and \(\beta\)s where before there were only \(\alpha\)s or \(\beta\)s, every bitstring differs from those of other layers by \(t/2\) bitflips.

Within a layer, by construction, the bitstrings differ from each other by at least \(t/2\) bitflips. Trivially, the first layer (\(i=1\)) consists of two bitstrings. Subsequent layers consist of twice as many bitstrings as their preceding layers since they can be considered as the result of the same bitflips performed on the given bitstring as the layer above, but over twice as many bits. It follows that the \(i\)th level consists of \(2^i\) bitstrings. This means that there is a total of \(2 + \sum_{i=1}^{k-1} 2^i = 2^k = t\) bitstrings.

We have therefore \(t\) additional total of bitstrings that differ from the given bitstrings of all \(1\)s and each other by at least \(t/2\) bitflips.~\qed
 \end{proof}
\begin{table}
\begin{tabular}{c|c|c|c}
\(t\) & given bitstring & depth \(i\) & \(t\) additional bitstrings\\
\hline
\(2\) & \(11\) & \(0\) & \(\alpha\), \(\beta\) \\
\hline
\multirow{2}{*}{\(4\)} & \multirow{2}{*}{\(1111\)} & \(0\) & \(\alpha \alpha\), \(\beta \beta\)\\
\cline{3-4} & & \(1\) & {\(\alpha \beta\), \(\beta \alpha\)}\\
\hline
\multirow{3}{*}{\(8\)} & \multirow{3}{*}{\(11111111\)} & \(0\) & \(\alpha \alpha \alpha \alpha\), \(\beta \beta \beta \beta\)\\
  \cline{3-4}
  & & \(1\) & \(\alpha \alpha \beta \beta\), \(\beta \beta \alpha \alpha\) \\
  \cline{3-4}
  & & \(2\) & \(\alpha \beta \beta \alpha\), \(\beta \alpha \alpha \beta\), \(\alpha \beta \alpha \beta\), \(\beta \alpha \beta \alpha\)\\
\hline
\multirow{8}{*}{\(16\)} & \multirow{8}{*}{\(1111111111111111\)} & \(0\) & \(\alpha \alpha \alpha \alpha \alpha \alpha \alpha \alpha\), \(\beta \beta \beta \beta \beta \beta \beta \beta\)\\
  \cline{3-4}
      & & \(1\) & \(\alpha \alpha \alpha \alpha \beta \beta \beta \beta\), \(\beta \beta \beta \beta \alpha \alpha \alpha \alpha\)\\
  \cline{3-4}
      & & \multirow{2}{*}{\(2\)} & \(\alpha \alpha \beta \beta \alpha \alpha \beta \beta\), \(\beta \beta \alpha \alpha \beta \beta \alpha \alpha\),\\
      & & & \(\alpha \alpha \beta \beta \beta \beta \alpha \alpha\), \(\beta \beta \alpha \alpha \alpha \alpha \beta \beta\)\\
  \cline{3-4}
      & & \multirow{4}{*}{\(3\)} & \(\alpha \beta \alpha \beta \alpha \beta \alpha \beta\), \(\alpha \beta \alpha \beta \beta \alpha \beta \alpha\),\\
      & & & \(\beta \alpha \beta \alpha \alpha \beta \alpha \beta\), \(\beta \alpha \beta \alpha \beta \alpha \beta \alpha\),\\
      & & & \(\alpha \beta \beta \alpha \alpha \beta \beta \alpha\), \(\beta \alpha \alpha \beta \beta \alpha \alpha \beta\),\\
      & & & \(\alpha \beta \beta \alpha \beta \alpha \alpha \beta\), \(\beta \alpha \alpha \beta \alpha \beta \beta \alpha\)\\
\end{tabular}
\label{tab:binarytree}
\caption{\(\alpha \equiv 01\) and \(\beta \equiv 10\). The additional bitstrings can be used as XOR masks to generate the appropriate additional bitstrings for given bitstrings other than all \(1\)s.}
\end{table}

An algorithm that generates \(t\) additional bitstrings that differ by at least \(t/2\) bitflips, such as those given in Table~\(1\), is given in Algorithm~\ref{alg:bitflips}.

\begin{algorithm}
  \KwData{$k$ such that $t=2^{k}$.}
  \KwResult{\emph{bitstring} array.}
  \Begin{
    \emph{bitstrings} \(\leftarrow\, \{\alpha\cdots\alpha, \, \beta\cdots\beta\}\)\;
    \emph{masks} \(\leftarrow\, \{2^{2^k}\}\)\;
    \For{treedepth \(\leftarrow 1\) \KwTo \(k-1\)}{
      \emph{levelmask} \(\leftarrow\) \(2^{2^{k-treedepth}}-1\)\;
      \For{levelmaskdepth \(\leftarrow 2\) \KwTo \(2^{treedepth-1}\)}{
        \emph{levelmask} \(\leftarrow\) \emph{levelmask} \(+\) \(2^{2^{k-treedepth+1}} \times\)\emph{levelmask}\;
      }
      \(\text{\emph{bitstring}} \leftarrow (\alpha\cdots\alpha\, \mbox{XOR}\, \text{\emph{levelmask}})\)\;
      \For{mask \(\leftarrow\) Cartesian products of elements in masks}{
        add \((\text{\emph{bitstring}} \, \mbox{XOR}\, \text{\emph{mask}})\) to \emph{bitstrings}\;
      }
      add \emph{levelmask} to \emph{masks}\;
    }
  }
  \caption{Generate additional bitstrings that differ from the \(t\)-bitstring of all \(1\)s by at least \(t/2\) bitflips.}
  \label{alg:bitflips}
\end{algorithm}

We introduce a constructive upper bound that is lower than Conjecture~\ref{conj:lowerbound}'s in the finite \(t\) case:
\begin{theorem}[Lower Bound in \(t\) for SPARSIFY]
  \label{th:sparsifyupperbound}
  The SPARSIFY procedure introduced by Bravyi~\emph{et al}.~\cite{Howard18} creates a \(\delta\)-approximate stabilizer decomposition of \(H^{\otimes t}\) with \(\mathcal O((2^{\sim 0.228 t}{-}t) \delta^{-2} )\) stabilizer states for \(t\) sufficiently large such that \(\delta^2 \gg (\xi^t_{\pi/4} -t)^{-1}\).
\end{theorem}
\begin{proof}
Following~\cite{Howard18}, we define additive error
\begin{equation}
\|\ket{D_\phi^{\otimes t}} - \ket{\psi}\| \le \delta,
\end{equation}
where \(\|\psi\| \equiv \sqrt{\braket{\psi}{\psi}}\).

\(\ket{\psi}\) is the sparsified \(k\)-term approximation to \(\ket{D_\phi(t)}\) given by
\begin{equation}
  \ket \psi = \frac{\|c\|_1}{k} \sum_{i=1}^k \ket{\omega_i},
\end{equation}
where each \(\ket{\omega_i}\) is independently chosen randomly so that it is a normalized stabilizer state \(\ket{\omega_i} = c_i/|c_i| \ket{\varphi_i}\) with probability \(p_i = |c_i|/\|c\|_1\). We define a random variable \(\ket \omega\) that is equal to \(\ket{\omega_i}\) with probability \(p_i\). Then
\begin{equation}
  \mathbb E (\ket{\omega}) = \ket {\psi}/\|c\|_1.
\end{equation}

By construction,
\begin{equation}
\mathbb E(\braket{\psi}{D_\phi^{\otimes t}}) = \mathbb E(\braket{D_\phi^{\otimes t}}{\psi}) = 1.
\end{equation}

The number of stabilizer states in the approximation is \(k\) and
\begin{eqnarray}
  \mathbb E (\| \ket{D_\phi^{\otimes t}} - \ket{\psi}\|^2) &=& \mathbb E( |\braket{D_\phi^{\otimes t}}{D^{\otimes t}}| ) - \mathbb E( |\braket{D_\phi^{\otimes t}}{\psi}| ) -\nonumber\\
  \label{eq:kdependence}
  && \mathbb E( |\braket{\psi}{D_\phi^{\otimes t}}| ) + \mathbb E( |\braket{\psi}{\psi}| ) \nonumber\\
  &\le& \frac{\xi_\phi^t}{k} - \frac{\gamma}{k},
\end{eqnarray}
where we simplified
\begin{eqnarray}
  \mathbb E(\braket{\psi}{\psi} ) &=& \sum_i^k \frac{\|c\|_1^2}{k^2} \mathbb E(\braket{\omega_i}{\omega_i}) + \sum_{i\ne j}^k \frac{\|c\|_1^2}{k^2} \mathbb E(\braket{\omega_i}{\omega_j}) \nonumber\\
  &=& \frac{\|c\|_1^2}{k} \mathbb E(|\braket{\omega}{\omega}|) + \sum_{i\ne j}^k \frac{\|c\|_1^2}{k^2} \mathbb E(\braket{\omega_i}{\omega_j}) \nonumber\\
  &\le& \frac{\|c\|_1^2}{k} + 1 - \frac{\gamma}{k}.
\end{eqnarray}

Eq.~\ref{eq:kdependence} is less than or equal to \(\delta^2\) when \(k = (\xi_\phi^t - \gamma) \delta^{-2}\).

If \(\ket{\omega_i}\) are independent and identically distributed (i.i.d.) stabilizer states then \(\sum_{i\ne j}^k \|c\|_1^2 \mathbb E(\|\braket{\omega_i}{\omega_j}) = \sum_{i\ne j}^k | \mathbb E(\bra{\psi}) \mathbb E(\ket{\psi}) | = k (k-1)\) and so \(\gamma = 1\). As a result, since \(1 \ll \xi_\phi^t\) as \(t\) increases, it was neglected in previous characterizations~\cite{Bravyi16_1}. 

However, \(\gamma\) can become significant if \(\ket{\omega_i}\) are not i.i.d. and \(\sum_{i\ne j}^k \|c\|_1^2 \mathbb E(\|\braket{\omega_i}{\omega_j}) \ne \sum_{i\ne j}^k | \mathbb E(\bra{\psi}) \mathbb E(\ket{\psi}) |\).

In particular, let us consider sampling the \(H\) magic state \(\ket{D_{\pi/4}^{\otimes t}} = \ket{H^{\otimes t}}\). In this case, its minimal \(L_1\) stabilizer state decomposition consists of a uniform superposition over \(\ket{\tilde 0}\) and \(\ket{\tilde 1}\), where \(|\braket{\tilde 0}{\tilde 1}| = 2^{-\frac{1}{2}}\). In the SPARSIFY algorithm, \(t\)-bit strings consisting of these stabilizer states are uniformly sampled to approximate \(\ket{H^{\otimes t}}\). 

By Lemma~\ref{le:bitflip}, let us supplement this set of \(t\)-bit strings with the \(t\) \(t\)-bit strings that differ from every uniformly sampled state and each other by at least \(t/2\) bit flips (referring to the tilde basis). It follows that these \((t{+}1)\) stabilizer states have inner products of \(\le 2^{-\frac{t/2}{2}}\).

\begin{figure}[h]
\input{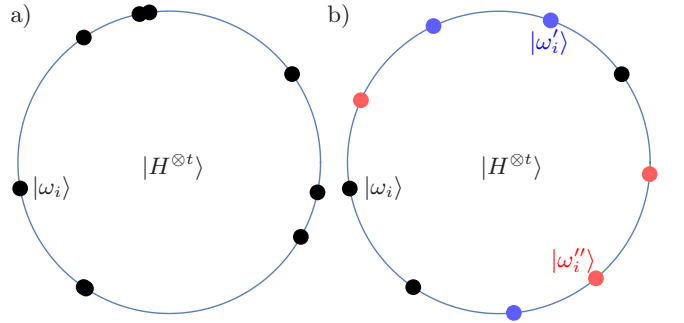}
\label{fig:cartoon_sketch}
\caption{Sketch of the different ensembles produced by (a) independent and (b) correlated \(L_1\) sampling. In this example, (a) there are nine independently sampled states, \(\{\ket{\omega_i}\}_i\), from the uniform distribution on the unit circle (the ``stabilizer state space''), which when considered as real vectors on \(\mathbb R^2\) have an expectation value close to \(\ket{H^{\otimes t}}\) at the origin. This ensemble can be transformed to a (b) correlated one, by supplementing the first three states with two \(\frac{t}{2}\)-bitflipped versions, \(\ket{\omega_i'}\) and \(\ket{\omega_i''}\) (blue and red), which are therefore far away and equidistant to each other on the unit circle, and discarding the rest. The expectation value of this ensemble is closer to \(\ket{H^{\otimes t}}\) but converges more slowly.}
\end{figure}

Since the ensemble consisting of uniformly sampled \(t\)-bit strings \(\ket{\omega_i}\) satisfies
\begin{equation}
\mathbb E(\braket{\psi}{H^{\otimes t}}) = \mathbb E(\braket{H^{\otimes t}}{\psi}) = 1,
\end{equation}
it follows that the \(t\) other ensembles consisting of the \(i\)th state with at least \(t/2\) bits flipped (\(i \in \{1, \ldots, t\}\)) compared to the uniformly sampled states, also satisfy this property. Therefore, the full ensemble produced by adding together these \((t{+}1)\) ensembles satisfies this property too.

However, taken together, these are no longer i.i.d. stabilizer states. In particular, for a given \(\bra{\omega_i}\), there exist at least \(t\) \(\ket{\omega_{f_i(j)}}\) such that \(|\braket{\omega_i}{\omega_{f_i(j)}}| \le 2^{-\frac{t/2}{2}}\). Hence,
\begin{eqnarray}
\label{eq:offdiagnorm}
\sum_{i\ne j}^k \|c\|_1^2 \mathbb E(\braket{\omega_i}{\omega_j}) &=& \sum_{i}^k \sum_{\substack{j\\i\ne j}}^{k-t} \|c\|_1^2 \mathbb E(\bra{\omega_i})\mathbb E(\ket{\omega_j}) \\
                                                                 && + \sum_{i}^k \sum_{j}^t \|c\|_1^2 \mathbb E(\bra{\omega_i})\mathbb E(\ket{\omega_{f_i(j)}}) \nonumber\\
                                                                 &\le& k (k-1-t) + \|c\|_1^2  2^{-\frac{t/2}{2}} t \\
                                                                 &=& k(k-\gamma),
\end{eqnarray}
where \(\|c\|^2_1 = \xi_{\pi/4}^t = 2^{\sim 0.228 t}\) and so \(\gamma = 1+(1-1/2^{\sim 0.02t})t\).

Therefore, given that at least \(k = (\xi_\phi^t - \gamma) \delta^{-2}\) stabilizer states are necessary to sample this state to \(\delta\) additive error, the SPARSIFY procedure creates a \(\delta\)-approximate stabilizer decomposition of \(H^{\otimes t}\) with \(\mathcal O((2^{\sim 0.228 t}{-}1{-}(1-1/2^{\sim 0.022 t}) t) \delta^{-2} )\) stabilizer states.

This more efficiently approximated state comes at the expense of its convergence probability, or sparsification tail bound. Following the same reasoning as in the proof of Lemma \(7\) of~\cite{Howard18},
\begin{eqnarray}
  &&\Pr \left[ \|H^{\otimes t} - \psi\|^2 \le \braket{\psi}{\psi} - 1 + \delta^2 \right] \nonumber\\
  \label{eq:tailbound}
  &&\ge 1 {-} 2 \exp \left( -\frac{\delta^2 \xi^t_{\pi/4}}{8 } {+} \frac{\gamma \delta^2}{8} \right) \\
  \label{eq:tailbound_final}
  &&= 1 {-} 2 \exp \left( -\frac{\delta^2 \xi^t_{\pi/4} }{8 } {+} \frac{(1 {+} (1{-}1/2^{\sim 0.02t}) t) \delta^2}{8} \right).
\end{eqnarray}

Therefore, given that \(\delta^2 \gg (\xi^t_{\pi/4} -t)^{-1}\), if post-selection is performed to discard samples that produce \(\braket{\psi}{\psi} - 1 \gg \delta^2\) (a rare event if this first condition is met) or \(\braket{\psi}{\psi}\) is approximated to relative error using the FASTNORM algorithm~\cite{Howard18} (which scales linearly with \(k\)), then the states \(\psi\) are generated with \(\mathbb E (\| \ket{H^{\otimes t}} - \ket{\psi}\|^2) < \delta^2 \) and consist of \(\mathcal O((2^{\sim 0.228 t} {-} t) \delta^{-2})\) stabilizer states.
\qed
\end{proof}

A sketch of the key idea used in the proof of Theorem~\ref{th:sparsifyupperbound} is shown in Figure~\ref{fig:cartoon_sketch}1. Independently sampled states with expectation value \(\ket{H^{\otimes t}}\) are replaced with a smaller subset that are supplemented with bit-flipped states. The resultant correlated distribution has an expectation value closer to \(\ket{H^{\otimes t}}\), but it converges to it more slowly.

Some polynomial factors in \(t\) are not included in the scaling cost, \(\mathcal O((2^{\sim 0.228 t} {-} t) \delta^{-2})\), of SPARSIFY. Moreover, there is a possible additional polynomial cost in correlated sampling from generating the bit-flipped supplemental states (such as using Algorithm~\ref{alg:bitflips}) compared to independent sampling. We claim that these changes in polynomial factors are negligible. This claim is supported by the scaling observed in the practical runtime of SPARSIFY plotted in Figure~\ref{fig:normscaling}2. A decrease in runtime is observed for correlated sampling that is lower bounded by proportionality to the fewer number of stabilizer states \(k = (\xi_\phi^t - \gamma) \delta^{-2}\) it generates.

\begin{figure}[h]
\input{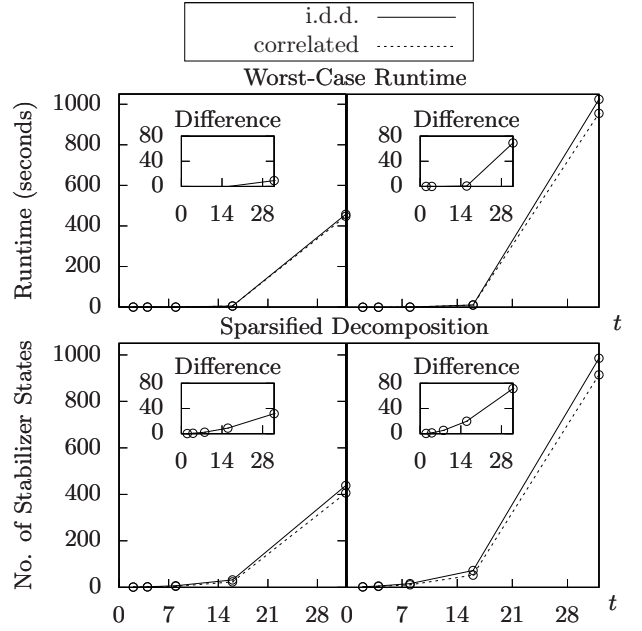}
\label{fig:normscaling}
\caption{Plots are for additive error (left) \(\delta = 0.6\) and (right) \(\delta = 0.4\) over \(100\) runs. The worst-case runtime of calculating the norm of \(\psi\) where \(|\psi {-} T^{\otimes t}| \le \delta^2\) is plotted at the top and the corresponding number of stabilizer states sampled in the sparsified decomposition with i.i.d. (solid curve) and correlated sampling (dashed curve) is plotted at the bottom. \(\psi\) is generated using the SPARSIFY algorithm and the norm is calculated using the FASTNORM algorithm of~\cite{Bravyi16_1} (using \(1000\) random stabilizer states to calculate the relative error). [Insets: The difference between the i.i.d. sampling and the correlated sampling curves.]}
\end{figure}

The statistical distribution of sparsified decompositions from independent sampling and correlated sampling are compared over \(1000\) numerical runs in Figure~\(3\). The expected value of the state generated by correlated sampling is closer to the desired state (middle of Figure~\(3\)). As a result, the standard deviation of the norm of the states generated by correlated sampling is larger (bottom of Figure~\(3\)) denoting poorer convergence, as expected. Hence, it is advantageous to use correlated sampling when \(\delta^2 \gg (\xi^t_{\pi/4} -t)^{-1}\) to obtain the same convergence probability as independent sampling does at \(\delta^2 \gg \xi^{-t}_{\pi/4}\). 

At small \(t\) a small-number effect occurs since the number of stabilizer states used in correlated sampling is set to the nearest multiple of \(t\) greater than or equal to \((\xi_\phi^t-\gamma)\delta^{-2}\) in practice. As a result, at small \(t\) more states are sampled than required and this produces a lower expectation value and standard deviation than expected. The factor of \(-1/2^{\sim 0.02 t}\) in Eq.~\ref{eq:tailbound_final} also reduces the standard deviation at low \(t\).

\begin{figure}[h]
\input{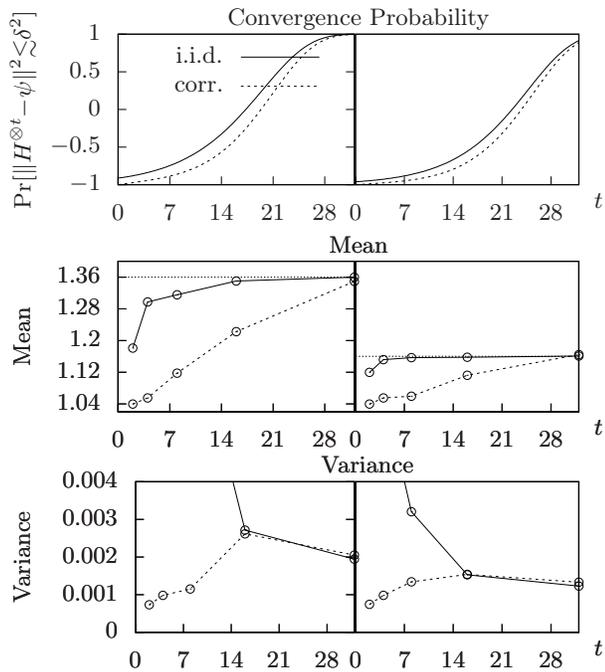}
\label{fig:meansandstdevs}
\caption{Plots are for additive error (left) \(\delta = 0.6\) and (right) \(\delta = 0.4\) over \(1000\) runs. At the top is plotted the convergence probability lower bound (or sparsification tail bound) given by Eq.~\ref{eq:tailbound_final} for i.i.d. sampling (solid curve) and correlated sampling (dashed curve). In the middle is plotted the mean of \(|\braket{\psi}{\psi}|^2\) for both i.i.d. and correlated sampling to \(\delta\) error (\(1+\delta^2\) is denoted by the dotted horizontal line). At the bottom is plotted the standard deviation of \(|\braket{\psi}{\psi}|^2\) of the sparsified samples, which can be interpreted as a measure of the convergence probability. The standard deviation converges more slowly for large \(t\) under correlated sampling than under independent sampling. This agrees with the requirement that \(\delta^2 \gg \xi^{-t}\) and \(\delta^2 \gg(\xi^t - t)^{-1}\) for i.i.d. and correlated sampling to exhibit \(\mathcal O(\xi^t \delta^{-2})\) and \(\mathcal O((\xi^t{-}t) \delta^{-2} )\) scaling, respectively, with the same probability. A small-\(t\) effect can be seen where the mean and standard deviation of the correlated samples is lower than that of the i.i.d. samples at \(t \lesssim 16\) as explained in the main text. The number of i.i.d. samples and correlated samples generated at particular \(t\)-values is shown in Fig.~\(2\).}
\end{figure}

This method can be extended to produce higher powers of \(t\) in \(\gamma\) and thereby improve performance further. In the proof of Theorem~\ref{th:sparsifyupperbound}, the source of the linear power in \(\gamma\) is due to correlated \(t\)-wise \(L_1\) sampling; every i.i.d. sampled state is supplemented with \(t\) samples with a known relative absolute inner product given by Lemma~\ref{le:bitflip}. However, it is easy to show that the number of mutually \({\ge} t/2\)-bitflipped states is larger than \(t\) and the number of states given by Lemma~\ref{le:bitflip} is a loose lower bound. The number of supplemented states can be increased to \(t^m\), for \(m>1\), limited by the minimal number \(k = (\xi_\phi^t - \gamma) \delta^{-2}\) of stabilizer states needed and the existence of states with the minimum number of mutual bitflips desired. This will add a corresponding power of \(t^m\) instead of \(t\) in \(\gamma\). It is also possible to extend \(\gamma\) to higher powers \(t^m\) for fixed \(k\) by supplementing every i.i.d. sampled state with \(t\) samples that have different relative absolute inner products. In both of these cases, doing so would increase the sparsification tail bound of \(\Pr \left[ \|H^{\otimes t} - \psi\|^2 \le \braket{\Omega}{\Omega} - 1 + \delta^2 \right]\) further. This would decrease the rate of convergence, and so would require \(\delta^2 \gg (\xi^t_{\pi/4} - t^{m})^{-1}\) for the improvement in scaling to outperform independent sampling to the same convergence probability. However, it is not clear how many such appropriately bit-flipped supplemental states exist given a bitstring and, therefore, it is not clear how large \(m\) of a reduction \(t^m\) in \(\gamma\) it is possible to accomplish. We leave this unresolved for future study.

A similar approach will also extend this method of correlated \(L_1\) norm sampling to any of the other diagonal states expressed by Eq.~\ref{eq:diagstate}. Such a treatment would differ only in that the distribution of \(t\) bit-flipped bit strings would be sampled from the non-uniform distribution given by Eq.~\ref{eq:diagstate} for \(\phi \ne \frac{\pi}{4}\).

Though the stabilizer extent \(\xi_\phi\) of one-, two-, and three-qubit states is multiplicative, general states do not have multiplicative stabilizer extent~\cite{Heimendahl21}. This introduces the peculiar notion that \(L_1\) sampling, which is upper bounded by the stabilizer extent (see Lemma \(2\) in~\cite{Bravyi16_1}), cannot do better than \(\mathcal O(2^{\sim 0.228 t})\) for the \(T\) gate magic state, but that you can always find a more optimal stabilizer decomposition for higher values of \(t\) for other states such that their worst-case scaling improves.

The results shown here may resolve this peculiarity. Namely, they show that \(L_1\) sampling is only asymptotically bounded by the stabilizer extent and that, for finite \(t\) values, an improvement can be found. This means that the scaling of the \(L_1\) sampling cost of one-, two-, and three- qubit magic states may behave similarly to the scaling of general states.

In conclusion, we show how to lower the finite \(t\) scaling cost of the popular SPARSIFY algorithm used in weak simulation of the \(T\)+Clifford gateset from \(\mathcal O(2^{\sim 0.228 t} \delta^{-2} )\) to \(\mathcal O((2^{\sim 0.228 t}{-} t) \delta^{-2} )\). We accomplish this by replacing its i.i.d. \(L_1\) sampling with correlated \(L_1\) sampling and we numerically demonstrate that this scaling reduction holds after including hidden prefactors polynomial in \(t\). We explain how further reductions in powers of \(t\) can be obtained with this method. To our knowledge, this is the first weak simulation algorithm that has lowered this bound's dependence on finite \(t\) in the worst-case.

\noindent---\newline
This material is based upon work supported by the U.S. Department of Energy, Office of Science, Office of Advanced Scientific Computing Research, under the Accelerated Research in Quantum Computing program.
Sandia National Laboratories is a multimission laboratory managed and operated by National Technology \& Engineering Solutions of Sandia, LLC, a wholly owned subsidiary of Honeywell International Inc., for the U.S. Department of Energy's National Nuclear Security Administration under contract {DE-NA0003525}. This paper describes objective technical results and analysis. Any subjective views or opinions that might be expressed in the paper do not necessarily represent the views of the U.S. Department of Energy or the United States Government.

\acknowledgments
The author thanks Mohan Sarovar for helpful discussions in the process of this research.

\bibliography{biblio}{}
\bibliographystyle{unsrt}

\appendix

\end{document}